\newcommand\semihuge{\@setfontsize\semihuge{22.3}{22}}
\DeclareMathOperator{\E}{\mathbb{E}}
\DeclareMathOperator*{\argmax}{arg\,max}
\newtheorem{theorem}{\bf Theorem}
\newtheorem{proposition}{\bf Proposition}
\begin{document}
	%
		\IEEEoverridecommandlockouts
	\title{Robust Deep Reinforcement Learning for Security and Safety in Autonomous Vehicle Systems}
	\author{Aidin Ferdowsi$^{1}$, Ursula Challita$^{2}$, Walid Saad$^{1}$, and Narayan B. Mandayam$^3$  
		\thanks{This research was supported by the U.S. National Science Foundation under Grants OAC-1541105 and OAC-1541069.}
		\thanks{$^{1}$Aidin Ferdowsi and Walid Saad are with Wireless@VT, Bradley Department of Electrical and Computer Engineering, Virginia Tech, Blacksburg, VA, USA,
			{\tt\small \{aidin, walids\}@vt.edu}}%
		\thanks{$^{2}$Ursula Challita is with School of Informatics, The University of Edinburgh, Edinburgh, UK,
			{\tt\small ursula.challita@ed.ac.uk}}%
		\thanks{$^{3}$Narayan B. Mandayam is with WINLAB, Dept. of ECE, Rutgers University, New Brunswick, NJ, USA,
			{\tt\small narayan@winlab.rutgers.edu}}%
		\vspace*{-100cm}}
	\makeatletter
	\patchcmd{\@maketitle}
	{\addvspace{0.5\baselineskip}\egroup}
	{\addvspace{-5\baselineskip}\egroup}
	{}
	{}
	\makeatother
	\maketitle
	\thispagestyle{empty}
	\pagestyle{empty}

	%
	\IEEEpeerreviewmaketitle
	
\begin{abstract}	
To operate effectively in tomorrow's smart cities, autonomous vehicles (AVs) must rely on intra-vehicle sensors such as camera and radar as well as inter-vehicle communication. Such dependence on sensors and communication links exposes AVs to cyber-physical (CP) attacks by adversaries that seek to take control of the AVs by manipulating their data. Thus, to ensure safe and optimal AV dynamics control, the data processing functions at AVs must be robust to such CP attacks. To this end, in this paper, the state estimation process for monitoring AV dynamics, in presence of CP attacks, is analyzed and a novel adversarial deep reinforcement learning (RL) algorithm is proposed to maximize the robustness of AV dynamics control to CP attacks. The attacker's action and the AV's reaction to CP attacks are studied in a game-theoretic framework. In the formulated game, the attacker seeks to inject faulty data to AV sensor readings so as to manipulate the inter-vehicle optimal safe spacing and potentially increase the risk of AV accidents or reduce the vehicle flow on the roads. Meanwhile, the AV, acting as a defender, seeks to minimize the deviations of spacing so as to ensure robustness to the attacker's actions. Since the AV has no information about the attacker's action and due to the infinite possibilities for data value manipulations, the outcome of the players' past interactions are fed to long-short term memory (LSTM) blocks. Each player's LSTM block learns the expected spacing deviation resulting from its own action and feeds it to its RL algorithm. Then, the the attacker's RL algorithm chooses the action which maximizes the spacing deviation, while the AV's RL algorithm tries to find the optimal action that minimizes such deviation. Simulation results show that the proposed adversarial deep RL algorithm can improve the robustness of the AV dynamics control as it minimizes the intra-AV spacing deviation.
\end{abstract}
\section{Introduction}
Intelligent transportation systems (ITS) will encompass autonomous vehicles (AVs), roadside smart sensors (RSSs), vehicular communications, and even drones \cite{ferdowsi2017deep,Mozaffari2016,zeng2018joint}. To operate in a truly autonomous manner in future ITSs, AVs must be able to process a large volume of ITS data collected via a plethora of sensors and communication links. Reliability of such data is crucial to mitigate the chances of vehicle collisions and improve the vehicular flow on the roads \cite{Amoozadeh2015}. However, this reliance on communications and data processing renders AVs highly susceptible to cyber-physical attacks. In particular, an attacker can possibly interject the AV data processing stage, reduce the reliability of measurements by injecting faulty data, and ultimately induce accidents or compromise the traffic flow in the ITS\cite{Kargl2008}. Such flow disruptions can also cascade to other interdependent critical infrastructure such as power grids or cellular communication systems that provide service to the ITSs \cite{ferdowsi2017colonel,Ferdowsi2017CB}.

Recently, a number of security solutions have been proposed for addressing intra-vehicle security problems \cite{Kleberger2011,Woo2015, Chaudhry2012, Studnia2013}. In \cite{Kleberger2011}, the authors identified the key vulnerabilities of a vehicle's controller and proposed a number of intrusion detection algorithms to secure this controller. Moreover in \cite{Woo2015}, the authors show that long-range wireless attacks on AVs' current security protocols can disrupt their controller area network. They analyze the vulnerabilities of AVs' intra-vehicle networks to outsider wireless attack. Meanwhile, the authors in \cite{Chaudhry2012} addressed the security challenges of plug-in electric vehicles, while accounting for their impact on the power system. Moreover, a survey on security threats and protection mechanisms in embedded automotive networks is presented in \cite{Studnia2013}.  

Furthermore, vehicular communication security challenges and solutions have also been studied recently in \cite{Calandriello2011,Kim2010,Sun2017,PETRILLO2018,Tuohy2015}. In \cite{Calandriello2011}, the security vulnerabilities of current vehicular communication architecture are analyzed. Moreover, the computational overhead caused by beacon encryption is mitigated by using a short term authentication scheme and a cooperative vehicle computation architecture. The authors in \cite{Kim2010} proposed the use of multi-source filters to reduce the security vulnerability of a vehicular network, with respect to data injection attacks. Furthermore, the in \cite{Sun2017} introduced a new framework to improve the trustworthiness of beacons by combining two physical measurements (angle of arrival and Doppler effect) from received wireless signals. Moreover, in \cite{PETRILLO2018}, the authors proposed a collaborative control strategy for vehicular platooning to address spoofing and denial of service attacks. Finally, an overview of current research on advanced
intra-vehicle networks and the smart components of ITS and their applications is presented in \cite{Tuohy2015}.

However, the architecture and solutions in \cite{Kleberger2011,Woo2015, Chaudhry2012, Studnia2013,Calandriello2011,Kim2010,Sun2017,PETRILLO2018,Tuohy2015} do not take into account the interdependence between the cyber and physical layers of AVs while designing their security solutions. Moreover, these existing works do not properly model the attacker's actions and goals. In this context, the cyber-physical interdependence of the attacker's actions and goals will help providing better security solutions. Moreover, the prior art in \cite{Kleberger2011,Woo2015, Chaudhry2012, Studnia2013,Calandriello2011,Kim2010,Sun2017,PETRILLO2018,Tuohy2015}, does not provide solutions that can enhance the robustness of AV dynamics control to attacks. Nevertheless, designing an optimal and safe ITS requires robustness to attacks on intra-vehicle sensors as well as inter-vehicle communication. Moreover, existing works on ITS security often assume a stable state for the attacker's action, while in many practical scenarios, the attacker might adaptively change its strategy to increase the impact of its attack on the ITS.

The main contribution of this paper is, thus, to propose a novel adversarial deep reinforcement learning (RL) framework that aims at providing robust AV control. In particular, we consider a car following model in which we focus on the control of an AV that closely follows another AV. Such a model is suitable because it captures the AV's dynamics control while taking intro account AV's sensor readings and beaconing. We consider four sources of information about the leading AV gathered from intra-vehicle sensors such as camera, radar, RSSs, and inter-vehicle beaconing. We consider an attacker which can inject bad data to such information and tries to increase the risk of accidents or reduce the vehicular flow. In contrast, the AV's goal is to optimally control its speed while staying robust to such data injection attacks from the attacker. To analyze the interactions between the AV and the attacker, we pose the problem as a game and analyze its Nash equilibrium (NE). However, we observe that obtaining the AV and attacker actions at the NE will be challenging due to having continuous attacker and AV actions sets as well as continuous AV speed and spacing. To address this problem, we propose two deep neural networks (DNNs) based on long-short term memory (LSTM) blocks for the AV and the attacker that extract the summary of past AV dynamics and feed such summaries to an RL algorithm for each player. On the one hand, the AV's RL algorithm tries to learn the best estimation from its leading AV's speed by combining the sensor readings. On the other hand, the RL algorithm for the attacker tries to deceive the AV and deviate the inter-vehicle optimal safe spacing. Simulation results show that the proposed deep RL algorithm converges to a mixed-strategy Nash equilibrium point and can lead to significant improvement in the AV's robustness to data injection attacks. The results also show that the AV can use the proposed deep RL algorithm to effectively learn the sensor fusion rule that minimizes the error in speed estimation thus reducing the deviations from the optimal safe spacing.

The rest of the paper is organized as follows. Section
\ref{sec:Model} introduces the system model for AV control. Section \ref{sec:Problem} formulates the robust AV control problem in a game-theoretic framework.
Section \ref{sec:Learning} proposes our adversarial deep learning algorithm. Section \ref{sec:Sim} analyzes the simulation and conclusions are drawn
in Section \ref{sec:Conc}.
\section{System Model}\label{sec:Model}
Consider a smart road in an ITS consisting of multiple AVs and RSSs. Each AV $ i $ is equipped with a camera to take images from the environment, a radar to measure distances from objects in the vicinity of the AV, and a transceiver device to communicate important position-speed-acceleration (PVA) beacons with nearby AVs and sensors over a cellular network. One challenging area in such ITSs is the optimal and safe flow control of AVs by using the collected measurements and received beacons. Moreover, the presence of an adversary might induce faulty decisions to the ITS and result in accidents or reduce the vehicle flow. Thus, the AVs' control on the roads must be robust to faulty data injected to the measurements and beacons by a malicious attacker. Next, we present an estimation model at each AV to observe the speed of its leading AV (i.e., the preceding AV) using sensor and beacon fusion and we model the adversary and its available actions. Then, we define a dynamic process to capture the spacing between the AVs as a function of the attacker and AV actions.
\subsection{Autonomous Vehicle Cyber-physical System}
In order to drive safely and prevent accidents, each AV $ i $ must acquire information about its own position, speed $ v_i $ as well as the distance and speed of some nearby objects such as the immediately leading AV, $ i-1 $. One framework to analyze the speed of an AV $i$ is by using the so called \emph{car-following} models that is popular in the literature \cite{BRACKSTONE1999181}. Here, we use the General Motors' first car-following model to analyze the speed update at each AV $ i $ as a function of AV $ i-1 $'s speed as follows\cite{BRACKSTONE1999181}:
\begin{align}\label{eq:GM}
	\dot{v}_i(t)=\lambda (\hat{v}_{i-1}(t)-v_{i}(t)),
\end{align}
where $ \lambda $ is a reaction parameter and $ \hat{v}_{i-1}(t) $ is the estimated speed of AV $ i-1 $ at AV $ i $. As we can see from \eqref{eq:GM}, each vehicle must estimate $ \hat{v}_{i-1}(t) $ at each time step in order to control its dynamics. To this end, each AV must use its own built-in sensors such as camera, radar as well as periodic reports from AV $ i-1 $ and the closest RSS. Thus, at each AV $ i $, AV $ i-1 $'s speed, $ v_{i-1}(t) $, must be estimated from AV $ i-1 $'s measured speed using a camera image, $ c_i $, and radar reading, $ r_i $, on AV $ i $, AV $ i-1 $'s speed report $ u_{i-1} $, and closest RSS's reported speed $ s_i $. Therefore, the relationship between AV $ i-1 $'s exact speed and the measurements can be expressed using a \emph{generic linear model} as follows:
\begin{align}
	\boldsymbol{z}_i(t)=\boldsymbol{H}_iv_{i-1}(t)+\boldsymbol{e}_i(t),
\end{align}
where $ \boldsymbol{H} \in \mathbb{R}^{4\times 1} $ is the measurement Jacobian matrix, $ \boldsymbol{z}_i\triangleq\left[c_i,r_i,u_{i-1}, s_i\right]^T $, and $ \boldsymbol{e}_i \in \mathbb{R}^{4\times 1} $ is a random error vector. Now, assuming complete information about $ \boldsymbol{H} $ and with a condition that $ \boldsymbol{H} $ is full rank, we can estimate $ d_i $ as follows:
\begin{align}
	\boldsymbol{H}_i^T\boldsymbol{z}_i(t)&=\boldsymbol{H}_i^T\boldsymbol{H}_id_i(t)+\boldsymbol{H}_i^T\boldsymbol{e}_i(t) \nonumber \\	\Rightarrow\left[\boldsymbol{H}^T_i\boldsymbol{H}_i\right]^{-1}\hspace{-3mm}\boldsymbol{H}_i^T\boldsymbol{z}_i(t)&=\left[\boldsymbol{H}^T_i\boldsymbol{H}_i\right]^{-1}\hspace{-3mm}\boldsymbol{H}_i^T\boldsymbol{H}_id_i(t)\nonumber\\&+\left[\boldsymbol{H}^T_i\boldsymbol{H}_i\right]^{-1}\hspace{-3mm}\boldsymbol{H}_i^T\boldsymbol{e}_i(t)\nonumber\\
	\Rightarrow d_i(t)&=\underbrace{\left[\boldsymbol{H}^T_i\boldsymbol{H}_i\right]^{-1}\hspace{-3mm}\boldsymbol{H}_i^T\boldsymbol{z}_i(t)}_{\hat{d}_i(t)}\nonumber\\
	&-\left[\boldsymbol{H}^T_i\boldsymbol{H}_i\right]^{-1}\hspace{-3mm}\boldsymbol{H}_i^T\boldsymbol{e}_i(t),
\end{align}
where $ \hat{d}_i $ is the estimated distance. Now, by defining $ \hat{\boldsymbol{z}}_i\triangleq \boldsymbol{H}_i\tilde{d}_i $ as the estimated measurement vector, we can find the measurement estimation error or residual as $ \tilde{\boldsymbol{z}}_i\triangleq\boldsymbol{z}-\hat{\boldsymbol{z}}_i $. Next, we can define a weighted cost function for the measurement residual as follows:
\begin{align}\label{eq:cost}
	J_i(\tilde{\boldsymbol{z}}_i)\triangleq\tilde{\boldsymbol{z}}_i^T\boldsymbol{W}_i\tilde{\boldsymbol{z}}_i=\left[\boldsymbol{z}_i-\hat{\boldsymbol{z}}_i\right]^T\boldsymbol{W}_i\left[\boldsymbol{z}_i-\hat{\boldsymbol{z}}_i\right],
\end{align}
where $ \boldsymbol{W}_i $ is a positive definite square matrix. If the measurements are not dependent, a typical choice for $ \boldsymbol{W}_i $ is to have positive diagonal components while the non-diagonal components are zero. Since in our model the sensor error are independent, we consider $\boldsymbol{W}_i $ to be a diagonal matrix in which $ w^i_{k} $ on the $ k $-th row and column of $ \boldsymbol{W}_i $ is the weight of measurement $ k $. The estimator at each AV $ i $ must minimize the cost function in \eqref{eq:cost}. It can be proven that the solution of this problem is given by \cite{dorf2011modern}:
\begin{align}\label{eq:estimation}
	\bar{v}_{i-1}(t)=\left[\boldsymbol{H}_i^T\boldsymbol{W}_i\boldsymbol{H}_i\right]^{-1}\boldsymbol{H}_i^T\boldsymbol{W}_i\boldsymbol{z}_i(t).
\end{align}

Since we know that all the sensors can directly measure the speed, we can consider $ \boldsymbol{H}=\left[1,1,1,1\right]^T $. Moreover, since the diagonal entities of $ \boldsymbol{W}_i $ are weights assigned to each sensor reading, thus we can consider $ \sum_{i=1}^{k}w^i_{k}=1 $. Now, \eqref{eq:estimation} can be simplified to:
\begin{align}\label{eq:estimation2}
	\bar{v}_{i-1}(t)=\frac{\sum_{k=1}^{4}w^i_k(t)z^i_k(t)}{\sum_{k=1}^{4}w^i_k(t)}=\sum_{k=1}^{4}w^i_k(t)z^i_k(t)=\boldsymbol{w}^T_i(t)\boldsymbol{z}_i(t),
\end{align}
where $ z^i_k(t) $ is the $ k $-th element of $ \boldsymbol{z}_i(t) $, and $ \boldsymbol{w}_i(t) $ is a vector with $ w^k_i(t) $ as its element $ k $.
\subsection{Attack Model}
In the studied system, an attacker is able to inject faulty data to any of the aforementioned sensor readings. Such attacks can take place using special lasers to alter camera and radar readings as well as man in the middle attacks to inject bad data into the input of the AV and RSS beacons. We define $ \tilde{\boldsymbol{z}}_i(t) $ as an ``under attack sensor vector" which can be defined as $\tilde{\boldsymbol{z}}^i(t)\triangleq \boldsymbol{z}^i(t)+\boldsymbol{a}^i(t),$ where $ \boldsymbol{a}^i(t) $ is the injected faulty data vector at time $ t $ to the sensor vector $ \boldsymbol{z}_i(t) $. Thus, such attack will induce a deviation in the value of the speed estimation which can be derived from \eqref{eq:estimation2} as follows:
\begin{align}\label{eq:estimationUattack}
	\tilde{v}_{i-1}(t)&=\boldsymbol{w}^T_i(t)\tilde{\boldsymbol{z}}_i(t),\nonumber\\
	&=v_{i-1}(t)+\boldsymbol{w}_i^T(t)\boldsymbol{e}_i(t)+\boldsymbol{w}_i^T(t)\boldsymbol{a}_i(t).
\end{align}
Hence, the attacker can change AV $ i-1 $'s estimated speed at AV $ i $ by injecting faulty data. However, to stay stealth, the attacker cannot inject any arbitrary data due to the physical limitations of the system. For instance, at each time step the attacker cannot report a very high or low speed to AV $ i $. Moreover, due to the difference in the sensor types (camera image, radar reading, beacons), the attacker cannot manipulate the sensors equally. Thus, we consider threshold levels for each sensor $ k $ such that $ |a_k^i(t)| < \tau^i_k $ where $ a_k^i(t)$ is the data injected to AV $ i $'s sensor $ k $.
\section{Cyber-physical Security Problem and Game Formulation}\label{sec:Problem}
From \eqref{eq:estimationUattack}, we can see that the AV $ i $'s estimated speed at each time step is a function of the actual AV $ i $'s speed, $ v_{i-1}(t) $, as well as the noise, $ \boldsymbol{e}_i(t) $, the weighting $ \boldsymbol{w}_i(t) $, and the attack $ \boldsymbol{a}_i(t) $ vectors. Thus, using \eqref{eq:GM} we can see that each AV $ i $'s speed $ v_i(t) $ is also a function of $ v_{i-1}(t) $,  $ \boldsymbol{e}_i(t) $, $ \boldsymbol{w}_i(t) $, and $ \boldsymbol{a}_i(t) $. Here, we analyze the spacing $d_i(t)$ between AVs $ i $ and $ i-1 $, before we subsequently investigate the optimal safe spacing for the AVs. It can be shown that the derivative of $ d_i(t) $ is the difference between the speeds of the AVs, as follows:
\begin{align}\label{eq:spacingspeed}
	\dot{d}_i(t)={v}_{i-1}(t)-v_{i}(t,\boldsymbol{a}_i(t),\boldsymbol{w}_i(t),\boldsymbol{e}_i(t)).
\end{align}

Thus, the spacing at each time step is a function of the AVs' speed as well as the sensor readings and the attack vector. Such attack vector can manipulate the spacing $ d_i(t) $ yielding two effects on the ITS: a) if $ d_i(t) $ decreases, the risk of collision between AVs increases and b) if $ d_i(t) $ increases, the traffic flow will reduce, which will be non-optimal and ineffective for the ITS operation. Therefore, the attacker's goal is to manipulate the spacing and deviate it from the optimal safe state while staying stealthy. In contrast, AV $ i $ tries to optimize its operation while staying robust to such sensor manipulations to minimize the spacing deviations. Formally, the attacker's goal is to find an attack vector $ \boldsymbol{a}_i^*(t) $ at each time step $ t $ such that:
\begin{align}\label{eq:att}
	\boldsymbol{a}_i^*(t)&=\max_{\boldsymbol{a}_i(t)} R_i(\boldsymbol{a}_i(t),\boldsymbol{w}_i(t),\boldsymbol{e}_i(t))\triangleq(d_i(t)-o_i(v_{i-1}(t)))^2\hspace{-1mm},\\
	&\text{s.t. }  |a^i_k(t)|<{\tau}_k^i\,\, \forall k=1,\dots,4
\end{align}
where $ R_i(t) $ is AV $ i $'s \emph{regret} function which quantifies the deviation from the optimal safe spacing and $ o_i(v_{i-1}(t)) $ is the optimal safe spacing at time $ t $. Conversely, AV $ i $'s objective is to find a weighting vector $ \boldsymbol{w}^*_i(t) $ to minimize the defined regret function as follows:
\begin{align}\label{eq:def}
	\boldsymbol{w}_i^*(t)&=\min_{\boldsymbol{w}_i(t)} R_i(\boldsymbol{a}_i(t),\boldsymbol{w}_i(t)),\\
	&\text{s.t. }  \sum_{k=1}^{4} {w}^i_k(t)=1.
\end{align}
The optimization problems in \eqref{eq:att} and \eqref{eq:def} are dependent on the actions of both the attacker and the AV. Solving such problem requires taking into account the interdependence of AV and the attacker's actions. In the following we analyze the interdependence of the attacker and the AV's actions to each other and their previous actions and we formulate such problem in a game-theoretic framework \cite{han2012game}.

To this end, we first derive the impact of past attacker and AV actions on their future actions. Next, we analytically derive a limit on the number of past regret samples which are enough to take future actions with $ T $ being the sampling period of the sensors.
\begin{theorem}\label{theorem:pastsamples}
	The attacker and the AV can optimally choose their future actions if: (i) $ \lambda T <2 $ and (ii) they have information about the regret for at least $ \bar{n} $ past time steps, where $ \bar{n} $ is the smallest integer that satisfies:
	\begin{align}
		\bar{n} \leq \frac{\log (\epsilon)}{\log(|1-\lambda T|)},
	\end{align}
	 where $ \epsilon $ is a small value.
\end{theorem}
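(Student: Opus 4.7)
The plan is to discretize the car-following dynamics in \eqref{eq:GM} with the sensor sampling period $T$ and show that the influence of past actions on the current regret $R_i$ decays geometrically with ratio $|1-\lambda T|$. Concretely, I would apply a forward-Euler step to \eqref{eq:GM} to obtain the scalar recursion
\begin{align*}
v_i((n+1)T) = (1-\lambda T)\, v_i(nT) + \lambda T\, \hat{v}_{i-1}(nT),
\end{align*}
and then unroll it so that $v_i(nT) = (1-\lambda T)^{n} v_i(0) + \lambda T \sum_{k=0}^{n-1}(1-\lambda T)^{n-1-k}\hat{v}_{i-1}(kT)$. Because $\hat{v}_{i-1}(kT)$ is, by \eqref{eq:estimationUattack}, a linear function of the weighting vector $\boldsymbol{w}_i(kT)$ and of the attack vector $\boldsymbol{a}_i(kT)$, the spacing $d_i(nT)$ obtained by integrating \eqref{eq:spacingspeed}, and hence the regret $R_i(nT)$ of \eqref{eq:att}, becomes a weighted combination of all past attacker and defender actions whose coefficients carry the factor $(1-\lambda T)^{n-1-k}$.

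Given this representation, the next step is to identify when the contribution of sufficiently old actions is negligible. A geometric weight sequence converges in absolute value only when $|1-\lambda T|<1$, which (since the reaction parameter $\lambda$ is positive by physical meaning) is equivalent to $0<\lambda T<2$; this furnishes condition (i). Under (i), I would then bound the aggregate contribution of actions taken more than $\bar{n}$ steps in the past. Using the attack-magnitude caps $|a_k^i(t)|<\tau_k^i$ and the normalization $\sum_{k} w_k^i(t)=1$, the tail of the geometric sum can be uniformly upper bounded by a constant multiple of $|1-\lambda T|^{\bar{n}}$. Requiring this tail to lie below a tolerance $\epsilon$ gives $|1-\lambda T|^{\bar{n}}\leq \epsilon$, and taking logarithms (noting that $\log|1-\lambda T|<0$ under (i) reverses the inequality) directly yields the stated threshold on $\bar{n}$.

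The main obstacle I anticipate is connecting the purely dynamical decay estimate to the intended statement that each player can optimally choose future actions from a finite window of past regret samples. The recursion above only tells us how much the present spacing depends on the action history; the claim in the theorem really amounts to saying that the sufficient statistic for predicting future regret has effective length $\bar{n}$. To make this precise I would introduce a Markov-up-to-$\bar{n}$ approximation of the state and show that the conditional expected regret given the last $\bar{n}$ action samples differs from the one given the full history by at most $O(|1-\lambda T|^{\bar{n}})$, thereby justifying truncating the LSTM's input window at $\bar{n}$ time steps. Verifying that the bound is uniform in the opponent's unknown play, rather than only in expectation, is the step where I expect most of the technical work to concentrate.
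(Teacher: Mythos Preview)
Your proposal is essentially the paper's own argument: the proof there also forward-Euler discretizes \eqref{eq:GM}, unrolls the recursion to exhibit $v_i(nT)$ as a geometric combination of past $\tilde{v}_{i-1}$ values (and hence of past $\boldsymbol{w}_i,\boldsymbol{a}_i$ via \eqref{eq:estimationUattack}), reads off stability as $|1-\lambda T|<1$, and truncates when $(1-\lambda T)^{\bar n}\le\epsilon$. The obstacle you anticipate---rigorously tying the geometric tail bound to ``optimal'' action choice via a uniform sufficient-statistic argument---is not addressed in the paper's proof, which simply asserts the truncation and stops; so your plan already covers everything the paper does and more.
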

\begin{proof}
	 First, due to discrete-time sensor readings, we convert the continuous car-following model in \eqref{eq:GM} to a discrete one while considering that AV $ i-1 $'s estimated speed is under attack, as follows:
	\begin{align}\label{eq:GMdisc}
	\frac{v_i(t+T)-v_i(t)}{T}&=\lambda(\tilde{v}_{i-1}(t)-v_i(t))\nonumber\\
	\Rightarrow	v_i(t+T)&=\lambda T \tilde{v}_{i-1}(t)+(1-\lambda T)v_i(t).
	\end{align} 
	From \eqref{eq:GMdisc} we can see that having a stable system requires $ |1-\lambda T| < 1 \Rightarrow 0<\lambda T <2  $ which always holds true. Moreover, we can use \eqref{eq:GMdisc} to find $ v_i(t) $ as a function of $  v_i(t-T) $ and $  \tilde{v}_{i-1}(t-T) $ as follows:
	\begin{align}
		v_i(t)&=\lambda T \tilde{v}_{i-1}(t-T)+(1-\lambda T)v_i(t-T)
	\end{align}
	and thus we will have:
	\begin{align}
		v_i(t+T)&=\lambda T \tilde{v}_{i-1}(t)\nonumber\\
		&+(1-\lambda T)\Big(\lambda T \tilde{v}_{i-1}(t-T)+(1-\lambda T)v_i(t-T)\Big)
	\end{align}
	By continuing this process for the past steps, we can establish the following relationship between AV $ i $'s future speed, its initial speed, and AV $ i-1 $'s past speed values:
	\begin{align}\label{eq:past}
		v_i(nT)&=(1-\lambda T)^{n-1}v_i(0)\nonumber\\
		&+\sum_{l=0}^{n}\lambda T (1-\lambda T)^l \tilde{v}_{i-1}\left((n-l)T\right).
	\end{align}
	\eqref{eq:past} shows that AV $ i-1 $'s older speed values have smaller effect on AV $ i $'s speed decision. Moreover, for large values of $n$, the term $(1-\lambda T)^{n-1} $ tends to zero. Thus, we can approximate \eqref{eq:past} as follows:
	\begin{align}\label{eq:approximate}
		v_i(nT) = \sum_{l=0}^{\bar{n}}\lambda T (1-\lambda T)^l \tilde{v}_{i-1}\left((n-l)T\right),
	\end{align}
	where $ \bar{n} $ is the smallest integer number which satisfies $ \bar{n} \leq \frac{\log (\epsilon)}{\log(|1-\lambda T|)} $, where $ \epsilon $ is a small value that can be defined based on the maximum allowable speed of an $ AV $ on the road. \eqref{eq:approximate} shows the dependence of $ v_i $ on the past values of $ v_{i-1} $. Now, using \eqref{eq:estimationUattack} we can find also the relationship between AV $ i $'s speed and its own past actions as well as those of the attacker, as follows:
	\begin{align}\label{eq:approximateattack}
	v_i(n) &= \sum_{l=0}^{\bar{n}}\lambda T (1-\lambda T)^l {v}_{i-1}\left(n-l\right)+ \sum_{l=0}^{\bar{n}}\lambda T(1-\lambda T)^l\nonumber\\
	&\times\big(\boldsymbol{w}_i^T(n-l)\boldsymbol{e}_i(n-l)+\boldsymbol{w}_i^T(n-l)\boldsymbol{a}_i(n-l)\big).
	\end{align}
	Thus, at each time step, only the $ \bar{n} $ past actions of the attacker and AV $ i $ will affect AV $ i $'s future speed if $ \lambda T < 2 $. Note that, in \eqref{eq:approximateattack}, we dropped $ T $ from the arguments of all the time variant functions for notational simplicity.
\end{proof}
Theorem \ref{theorem:pastsamples} proves that in order to solve the optimization problems in \eqref{eq:att} and \eqref{eq:def}, the AV and the attacker can only use their past $ \hat{n} $ actions.
Next, we derive the initial spacing between AVs when AV $ i $ for the first time must start following $ i-1 $ to converge to an optimal safe spacing, which will be useful in the solution of our problem.
\begin{proposition}\label{proposition:initial}
	The spacing between AVs $ i $ and $ i-1 $ converges to an optimal safe spacing if AV $ i-1 $ must start following AV $ i-1 $ when the spacing is:
	\begin{align}\label{eq:initialspacing}
		d^*(\nu)\hspace{-1mm}= \hspace{-1mm}o(\nu) - (\hat{n}+2) T \nu +T\nu \sum_{p=0}^{\hat{n}-1}\left(1- (1-\lambda T)^p\right),
	\end{align}
	where $ d^*(\nu) $ is the spacing when the AV $ i $ starts following AV $ i $, and $ \nu $ is the expectation of $ v_{i-1} $, $ \E\{v_{i-1}\} = \nu $.
\end{proposition}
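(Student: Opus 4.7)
The plan is to reduce the continuous spacing equation \eqref{eq:spacingspeed} to a discrete telescoping recursion, and then substitute the explicit expression for $v_i(nT)$ that Theorem \ref{theorem:pastsamples} already supplies, so that the requirement $\E\{d_i(\hat n T)\} = o(\nu)$ becomes a single algebraic equation in the unknown initial spacing $d^*(\nu)$.

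First, I would discretize \eqref{eq:spacingspeed} with sampling period $T$ to obtain
\begin{align*}
d_i\bigl((k+1)T\bigr) = d_i(kT) + T\bigl(v_{i-1}(kT) - v_i(kT)\bigr),
\end{align*}
and then iterate this from $k=0$ up to $k=\hat n - 1$, using $d_i(0) = d^*(\nu)$, to get a telescoping representation
\begin{align*}
d_i(\hat n T) = d^*(\nu) + T\sum_{k=0}^{\hat n - 1}\bigl(v_{i-1}(kT) - v_i(kT)\bigr).
\end{align*}

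Next, I would take expectations on both sides. Under the stealthy attack model of Section II.B and the zero-mean sensor-noise assumption, the cross terms $\boldsymbol{w}_i^T\boldsymbol{e}_i$ and $\boldsymbol{w}_i^T\boldsymbol{a}_i$ from \eqref{eq:estimationUattack} vanish in expectation during the initial acquisition phase, so I may replace $\tilde v_{i-1}$ by its mean $\nu$. Substituting the explicit closed form \eqref{eq:past} from Theorem \ref{theorem:pastsamples} with $v_i(0)=0$ (AV $i$ begins at rest before it starts to track AV $i-1$) and using the geometric sum
\begin{align*}
\sum_{l=0}^{k}\lambda T(1-\lambda T)^{l} = 1-(1-\lambda T)^{k+1},
\end{align*}
gives $\E\{v_i(kT)\} = \nu\bigl(1-(1-\lambda T)^{k+1}\bigr)$, which I would then plug back into the telescoping sum.

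Finally, I would impose the convergence requirement $\E\{d_i(\hat n T)\} = o(\nu)$ and solve the resulting linear equation for $d^*(\nu)$. After collecting the constant $\nu$-terms and the geometric-series terms and re-indexing $p = k+1$ (or $p = k$, depending on convention) the right-hand side takes exactly the claimed form \eqref{eq:initialspacing}. The stability hypothesis $\lambda T<2$ from Theorem \ref{theorem:pastsamples} guarantees that the geometric tail is controlled, so the truncation to $\hat n$ samples is consistent with ignoring terms of order $(1-\lambda T)^{\hat n}$. The main obstacle I anticipate is bookkeeping: keeping the telescoping index, the inner geometric-sum index, and the $\hat n$-step truncation aligned so that the final algebraic manipulation reproduces the coefficient $(\hat n + 2)$ and the summation range $p=0,\dots,\hat n -1$ exactly as in \eqref{eq:initialspacing}, rather than any off-by-one variant.
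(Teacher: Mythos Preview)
Your overall strategy---discretize the spacing ODE, telescope, substitute the closed form for $v_i$ from Theorem~\ref{theorem:pastsamples}, take expectations, and solve for the initial spacing---is exactly what the paper does. Two points, however, separate your plan from the paper's argument and are worth flagging.

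First, you telescope only up to $k=\hat n-1$ and then impose $\E\{d_i(\hat n T)\}=o(\nu)$. That fixes the expected spacing at the single instant $\hat nT$; it does not establish that the spacing \emph{converges}, i.e.\ that $\E\{d_i(nT)\}$ stabilizes for all large $n$. The paper instead iterates to a generic $n$, obtains a double sum $\sum_{p=0}^{n}\sum_{l=0}^{\min\{\bar n,p\}}\lambda T^2(1-\lambda T)^l$, and then splits the outer sum at $p=\hat n$. For $p\ge \hat n$ the inner geometric sum is essentially $1$ (this is precisely where the truncation estimate $(1-\lambda T)^{\hat n}\le\epsilon$ is invoked), so the tail contributes $(n-\hat n+2)T\nu$, which cancels the $nT\nu$ coming from $\sum_p v_{i-1}$. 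The resulting expression is independent of $n$, which is the actual content of ``converges''; only then does one set it equal to $o(\nu)$ and read off $d^*(\nu)$. Your plan, as written, skips this cancellation step.

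Second, the paper's discrete spacing recursion is $d_i(n+1)=d_i(n)+T\bigl(v_{i-1}(n+1)-v_i(n+1)\bigr)$, with the velocities evaluated at the \emph{new} index, whereas you use $v_{i-1}(kT)-v_i(kT)$. Combined with the paper's use of the truncated form \eqref{eq:approximate} rather than \eqref{eq:past} with $v_i(0)=0$, this shift is what produces the particular coefficient $(\hat n+2)$ in \eqref{eq:initialspacing}. Your anticipated ``bookkeeping obstacle'' is therefore not merely cosmetic: with your indexing and your stopping time you will land on a formula that differs from \eqref{eq:initialspacing} by a constant multiple of $T\nu$, and the discrepancy will not disappear by re-indexing alone unless you also adopt the paper's forward-difference convention and carry the telescoping out to general $n$.
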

\begin{proof}
	From \eqref{eq:spacingspeed}, we can find the following:
	\begin{align}
		d_i(n+1)&=d_i(n)+T(v_{i-1}(n+1)-v_i(n+1))\nonumber\\
		&=d_i(n)+Tv_{i-1}(n+1)\nonumber\\
		&-\sum_{l=0}^{\bar{n}}\lambda T^2 (1-\lambda T)^l \tilde{v}_{i-1}\left(n-l+1\right).
	\end{align}
	Let the estimation $ \tilde{v} $ be noise-free and attack-free. Then, we can find $ d_i $ as a function of its initial state and only AV $ i $'s speed as follows:
	\begin{align}
		d_i(n+1) &= d_i(0) + \sum_{p=0}^{n}Tv_{i-1}(p+1)\nonumber\\
		&-\hspace{-1mm}\sum_{p=0}^{n}\hspace{-2mm}\sum_{l=0}^{\min\{\bar{n},p\}}\lambda T^2 (1-\lambda T)^l v_{i-1}\left(p-l+1\right),
	\end{align}
	 Then, the expectation of the spacing will be:
	\begin{align}
		\E\{d_i(n+1)\}&=d_i(0)+nT\nu - \nu \sum_{p=0}^{n}\hspace{-1mm}\sum_{l=0}^{\min\{\bar{n},p\}}\hspace{-3mm}\lambda T^2 (1-\lambda T)^l \nonumber \\
		&=d_i(0)+nT\nu\nonumber\\
		&-\nu\sum_{p=0}^{n}\lambda T^2\frac{1-(1-\lambda T)^{\min\{\bar{n},p\}}}{1-(1-\lambda T)}\nonumber\\
		&=d_i(0)+nT\nu - T\nu \sum_{p=0}^{\hat{n}-1}\left(1- (1-\lambda T)^p\right)\nonumber\\
		& -  T\nu \sum_{p=\hat{n}}^{n}\left(1- (1-\lambda T)^{p}\right)\nonumber\\
		& =d_i(0)+nT\nu - (n-(\hat{n}-1)+1) T \nu\nonumber\\
		&-T\nu \sum_{p=0}^{\hat{n}-1}\left(1- (1-\lambda T)^p\right)\nonumber\\
		&=d_i(0) \hspace{-1mm} + \hspace{-1mm}(\hat{n}+2) T \nu \hspace{-1mm}-\hspace{-1mm}T\nu\hspace{-1mm} \sum_{p=0}^{\hat{n}-1}\hspace{-1mm}\left(1- (1-\lambda T)^p\right).\nonumber
	\end{align}
	Given that our goal is to reach an optimal safe spacing, we have $ \E\{d_i(n+1)\} = o(\nu) $, and, then, we can derive \eqref{eq:initialspacing}.
\end{proof}
Proposition \ref{proposition:initial} shows that AV $ i $ must start following AV $ i-1 $ when it reaches a distance of $ d^*(\nu) $ from AV $ i-1 $ while AV $ i-1 $'s speed is $ \nu $. Now, if $ d_i(0)=d_i^*(\nu) $, we can formally define the regret function of \eqref{eq:att} as follows:
\begin{align}
	R_i(n)\hspace{-1mm}&= \hspace{-1mm}\lambda^2 T^4\hspace{-0.5mm} \Big[\hspace{-0.5mm}\sum_{p=0}^{n}\hspace{-1.5mm}\sum_{l=0}^{\min\{\bar{n},p\}}\hspace{-3.5mm} (1-\lambda T)^l \Big( \boldsymbol{w}_i^T(p-l+1)\boldsymbol{e}_i(p-l+1)\nonumber\\
	&+\boldsymbol{w}_i^T(p-l+1)\boldsymbol{a}_i(p-l+1)\Big)\Big]^2,\vspace{4mm}
\end{align}
where for notational simplicity we use $ R_i(n) $ instead of $ R_i(\boldsymbol{a}_i(n),\boldsymbol{w}_i(n),\boldsymbol{e}_i(n)) $ as defined in \eqref{eq:att}.
We can see that, at each time step $ n $, the regret function accumulates the errors from the initial time step till $ n $. Thus, if we define the \emph{deviation} from optimal safe spacing at time step $ n $ as follows:

\begin{align}\label{eq:deviation}
	\delta_i(n)&\triangleq\sum_{p=0}^{n}\hspace{-1mm}\sum_{l=0}^{\min\{\bar{n},p\}}\hspace{-3mm} (1-\lambda T)^l \Big( \boldsymbol{w}_i^T(p-l+1)\boldsymbol{e}_i(p-l+1)\nonumber\\&+\boldsymbol{w}_i^T(p-l+1)\boldsymbol{a}_i(p-l+1)\Big),
\end{align}
then, we can derive the deviation as a process as follows:

\begin{strip}\vspace{-2mm}
	\begin{align} \label{eq:deviationprocess}
	\delta_i(n)\hspace{-1mm}=\hspace{-1mm}\delta_i(n-1)\hspace{-1mm}+\hspace{-4.5mm} \underbrace{\sum_{l=0}^{\min\{\bar{n},n\}}\hspace{-3mm} (1-\lambda T)^l \hspace{-0.5mm}\Big(\hspace{-0.5mm} \boldsymbol{w}_i^T(\min\{\bar{n},n\}-\hspace{-0.5mm}l\hspace{-0.5mm}+1\hspace{-0.5mm})\boldsymbol{e}_i(\min\{\bar{n},n\}-\hspace{-0.5mm}l\hspace{-0.5mm}+1\hspace{-0.5mm})+\boldsymbol{w}_i^T(\min\{\bar{n},n\}-\hspace{-0.5mm}l\hspace{-0.5mm}+1\hspace{-0.5mm})\boldsymbol{a}_i(\min\{\bar{n},n\}-\hspace{-0.5mm}l\hspace{-0.5mm}+1\hspace{-0.5mm})\hspace{-0.5mm}\Big)}_{\theta\left(\boldsymbol{w}_i,\boldsymbol{a}_i,\boldsymbol{e}_i\right)}\hspace{-0.5mm}.
	\end{align}\vspace{-5mm}
\end{strip}

Thus, we can write the regret function as follows:
\begin{align}
	R_i(n)=\lambda^2 T^4 \Big[\delta_i(n-1)+\theta\left(\boldsymbol{w}_i(n),\boldsymbol{a}_i(n),\boldsymbol{e}_i(n)\right)\Big]^2.
\end{align}
Then, at each time step $ n $, the attacker and the AV must choose their associated vectors using their past $ \hat{n} $ actions and the deviation from last step, $ \delta_i(n-1) $.

We now formally define a noncooperative game where the players are the attacker and the AV, the AV's action $ \alpha^{\text{AV}}(n) $ is to choose a weighting vector at each time step, $ \boldsymbol{w}_i(n) $, and the attacker's action, $ \alpha^{\text{att}}(n) $ is to choose a data injection vector at each time step, $ \boldsymbol{a}_i(n) $. Moreover, the AV's utility function is $ U^{\text{AV}}(n) = -R_i(n) $ while the attacker's utility function is $ U^{\text{att}} = R_i(n) $. A suitable solution concept for the defined game is the so-called \emph{Nash equilibrium (NE)} which a stable game state at which the AV cannot reduce the regret by unilaterally changing its action $ \boldsymbol{w}_i(n) $ given that the action of the attacker is fixed. Moreover, at the NE, the attacker cannot increase the regret by changing its action $ \boldsymbol{a}_i(n) $ while the AV keeps its action fixed. Since the players utility at each time step sum up to zero, the game is zero-sum and is guaranteed to admit at least one \emph{mixed-strategy Nash equilibrium (MSNE)}\cite{bacsar1998dynamic}. A \emph{mixed strategy} is a randomization between the available actions of the AV which satisfy $ \sum_{i=1}^nw_i^k =1 $ and the available actions of the attacker which satisfy $ |a_i^k|<\tau_k,\,\, \forall k =1,\dots,4 $. Even though the MSNE exists for our game, it is analytically challenging to derive the equilibrium strategies. Thus, we next propose a deep RL algorithm for this game in which the AV and the attacker learn their optimal actions based on their time-varying observations of each others' actions.

\section{Adversarial Deep Reinforcement Learning for Optimal Safe AV Control}\label{sec:Learning}
The proposed deep RL algorithm have two components: (i) A DNN that summarizes the past actions and spacing deviations and (ii) an RL component, which can be used by each player to decide on the best action to choose based on the summary from the DNN, as shown in Fig. \ref{fig:DRL}.

To derive the AV and the attacker's actions that maximize their expected utility using RL, we use a Q-learning algorithm \cite{heinrich2016deep}. In this algorithm, we define a state-action value $ Q $-function $ Q^j(s^j,\alpha^j) $ which is the expected return of player $ j $ when starting at a \emph{state} $ s^j $ and performing action $ \alpha^j $. To derive the maximizer action at each time step for each player, we use the following update rule for the Q function \cite{heinrich2016deep}:
\begin{align}\label{eq:Qfunction}
	&Q^j_{n+1}(s^j(n),\alpha^j(n))\hspace{-0.5mm}=\nonumber\\\hspace{-0.5mm}	&Q^j_n(s^j(n),\alpha^j(n))+\beta \Big[U^j(n+1)+\hspace{-0.7mm}\nonumber\\
	&\gamma\hspace{-0.7mm}\max_{\alpha^j}\hspace{-0.5mm}Q^j_{n+1}(\hspace{-0.5mm}s^j\hspace{-0.5mm}(n+1),\hspace{-0.5mm}\alpha^j\hspace{-0.5mm})-\hspace{-0.5mm}Q^j_{n}\hspace{-0.5mm}(\hspace{-0.5mm}s^j(n),\alpha^j(n)\hspace{-0.5mm})\hspace{-0.5mm}\Big]\hspace{-0.5mm},
\end{align}
where $ \beta $ is the learning rate and $ \gamma $ is the discount factor. In our problem, $ \alpha^{\text{att}} = \boldsymbol{a}_i(n) $ is the attacker's action while, $ \alpha^{\text{AV}} = \boldsymbol{w}_i(n) $ is the AV's action. Moreover, since the players have no information about the other player's past actions and noise vector, the observed state for the AV is $ s^{\text{AV}}(n) =\left\{\hspace{-0.5mm}\boldsymbol{w}_i(n-\hat{n})\hspace{-0.5mm}, \hspace{-0.5mm}\dots\hspace{-0.5mm}, \hspace{-0.5mm} \boldsymbol{w}_i(n-1)\hspace{-0.5mm};\hspace{-0.5mm}\delta_i(n-\hat{n})\hspace{-0.5mm},\hspace{-0.5mm}\dots\hspace{-0.5mm},\hspace{-0.5mm}\delta_i(n-1)\hspace{-0.5mm}\right\} \hspace{-0.5mm}$ while the observed state for the attacker is $ s^{\text{att}}(n) =\left\{\boldsymbol{a}_i(n-\hat{n}), \dots, \boldsymbol{a}_i(n-1);\delta_i(n-\hat{n}),\dots,\delta_i(n-1)\right\} $. From \eqref{eq:Qfunction}, we can see that at each step, the players must find the action which maximizes  $ Q^j_n $. However, to be able to find such action, each player must know all the possible states, and at each time step find the maximizer action over the observed action. However, in our problem, all the available states cannot be stored since, $ \boldsymbol{w}_i(n) $, $ \boldsymbol{a}_i(n) $, $ \boldsymbol{e}_i(n) $, and $ \delta_i(n) $ have continuous values which will result in an infinite state space.

\begin{figure*}
	\centering
	\includegraphics[width=0.8\textwidth]{./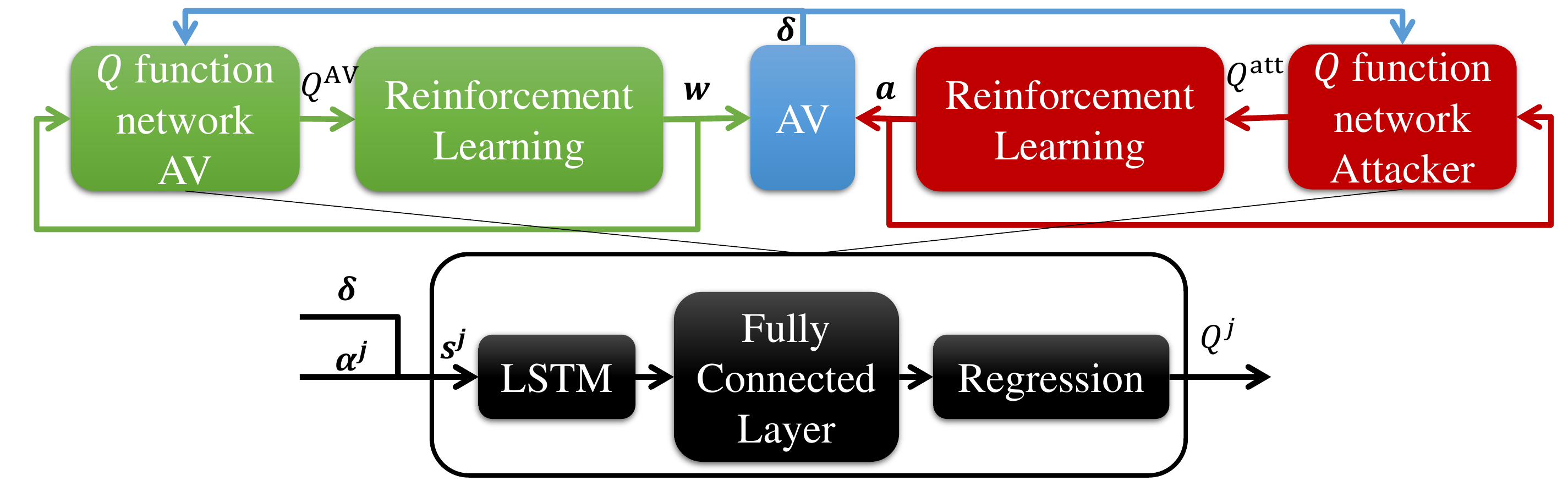}
	\caption{The architecture for the proposed adversarial deep RL algorithm.}
	\label{fig:DRL}
\end{figure*}

To solve such a challenging problem, we use DNNs which are very effective at extracting features from large data sets. Particularly, we use long short term memory (LSTM) blocks which are deep recurrent neural networks (RNNs) that can store information for long periods of time and, thus, can learn long-term dependencies within
a given sequence \cite{ferdowsi2018deep,chen2017machine,LSTM}. Essentially, an LSTM algorithm processes an input sequence $ s^j(n) $ by adding new information into a memory, and using gates which control the extent to which new information should be memorized, old information should be forgotten, and current information should be used. Therefore, the output of an LSTM algorithm will be impacted by the network activation in previous time steps. Thus, LSTMs are suitable for our problem in which
we want to extract useful features from actions and deviation of previous time steps and reduce our state space. Thus, the proposed deep RL algorithm will use a DNN as shown in Fig. \ref{fig:DRL} to approximate the $ Q $ function for each player and using this $ Q $-function we will choose optimal actions for each player from \eqref{eq:Qfunction}. Algorithm \ref{Algorithm:DeepRL} summarizes the proposed adversarial deep RL approach that is used by each player to learn its optimal action vectors. Moreover, Fig. \ref{fig:DRL} shows the DNN architecture for the proposed adversarial deep RL algorithm. Using the proposed algorithm, we can find the optimal actions for the players and it will converge to one of the MSNE points of the game \cite{heinrich2016deep}.

\begin{algorithm}[t]
	\caption{Adversarial Deep RL for Robust AV Control}
	\begin{algorithmic}[1]\footnotesize 
		\State Initialize two \emph{replay memory} $ M^j $ that stores the past experiences of the players and two DNNs for $ Q^j $.  
		\State Observe initial state $ s^j(0) $ for both players.
		\State \textbf{Repeat:}
		\State \quad Select an action $ \alpha^j $ for each player $ j $:
		\State \quad \quad with probability $ \varepsilon $ select a random action,
		\State \quad \quad otherwise select $ \alpha^j = \argmax_{\alpha'^j} Q^j(s^j(n),\alpha'^j) $.
		\State \quad Perform action $ \alpha^j $ for both players simultaneously. 
		\State \quad Observe utility $ U^j(n+1) $ and new state $ s^j(n+1) $.
		\State \quad Store \emph{experience} $ \left\{s^j(n),\alpha^j(n),U^j(n+1),s^j(n+1)\right\} $ in replay \Statex \quad memory $ D^j $ for each player $ j $.
		\State \quad Sample a random experience $ \left\{\hspace{-0.5mm}\hat{s}^j(\eta), \hspace{-0.5mm}\hat{\alpha}^j(\eta),\hspace{-0.5mm}\hat{U}^j(\eta+1),\hspace{-0.5mm}\hat{s}^j(\eta+1)\hspace{-0.5mm}\right\} $
		\Statex \quad from the replay memory $ D^j $ for each player.
		\State \quad Calculate the \emph{target} value $ t^j $ for each player $ j $:
		\State \quad \quad If the sampled experience is for $ n=0 $ then $ t^j=\hat{U}^j  $,
		\State \quad \quad Otherwise $ t^j=\hat{U}^j + \gamma \max_{\alpha'^j} Q^j(\hat{s}^j(n+1),\alpha'^j) $.
		\State \quad Train the network $ Q^j $ for each player using:
		\Statex \quad $ [t^j-Q^j(\hat{s}^j(n),\hat{a}^j(n)) ]^2$.
		\State \quad $ n = n+1 $.
		\State \textbf{Until} convergence to an MSNE
	\end{algorithmic}
	\label{Algorithm:DeepRL}
\end{algorithm}

\section{Simulation Results and Analysis}\label{sec:Sim}
For our simulations, we choose a reaction parameter $ \lambda = 1 $ and a sampling period $ T=1 $. Using Theorem \ref{theorem:pastsamples}, by choosing $ \epsilon = 0.001 $ we find $ \hat{n} = 66 $ which is equivalent to $ 6.6 $ seconds. This means that each AV only needs the information about the past $ 6.6 $ seconds to be able to carry out an optimal safe action. Moreover, we consider that the sensor noise powers are arranged in a descending order as follows: RSS, radar, camera and beacon. This is due to the fact that the RSS might have the highest error for speed measurement while the beacon is sending the exact speed information from AV $ i-1 $ to AV $ i $. In addition, we do not supply the information about noise statistics to the AV and the attacker. Thus, they both must learn such information during the interaction with each other. Moreover, we consider that the attack threshold levels for the sensors are $ \tau_1 = 0.5$ m/s , $ \tau_1 = 1$ m/s , $tau_3 =1$ m/s, and $  \tau_1 = 1.5  $ m/s.

\begin{figure}[!t]
	\centering
	\includegraphics[width=0.95\columnwidth]{./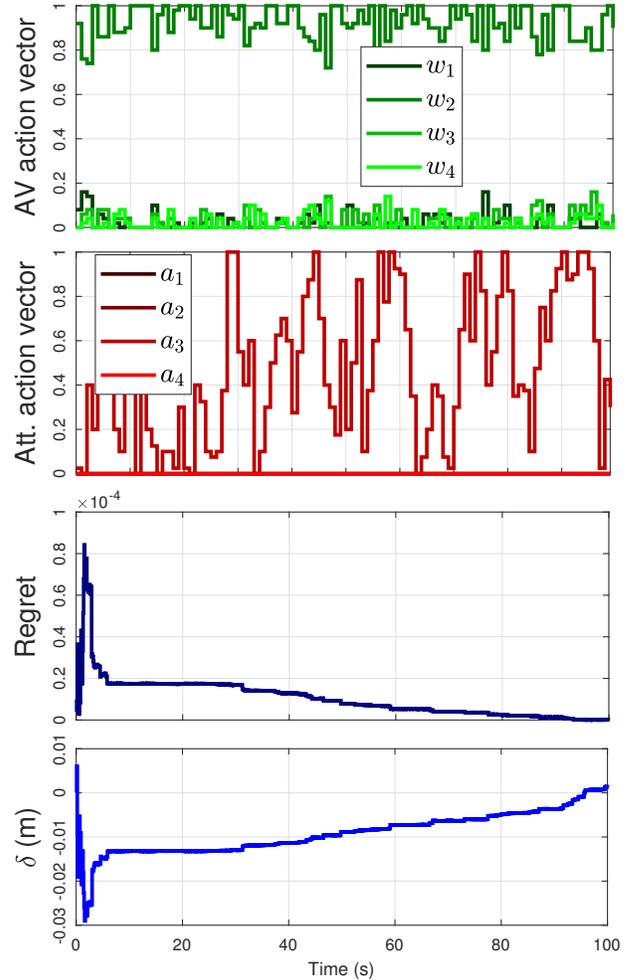}
	\caption{The AV and the attacker's action, regret, and deviation for our proposed algorithm in the case where the attacker attacks only to the beacon information.}
	\label{fig:att2beacon}
\end{figure}

In our first simulation, we consider a case in which the attacker can only attack beacon values as it is one of the most studied attacks in the literature. Also, since the beacon has the lowest error power, the ideal case for the AV is to put the highest weight on beacon information in the absence of the attacker. Fig. \ref{fig:att2beacon} shows the action vector for the AV and the attacker when they interact for the first 100 seconds during which AV $ i $ follows $ i-1 $. From Fig. \ref{fig:att2beacon} we can see that, even though the beacon has the lowest error power, since the attacker attacks the beacons, the AV decides to put more weight on other sensors. Moreover, Fig. \ref{fig:att2beacon} shows that, although the attacker can always have a data injection that is equal to the threshold level, $ \tau_3 = 1 $, it can sometimes decide to inject lower values, to maximize the expected deviation. In addition, Fig. \ref{fig:att2beacon} shows that, in the first steps of the learning procedure, the attacker can inject deviations in the spacing thus increasing the regret for the AV. However, our proposed deep RL algorithm enables the AV to mitigate the error on the estimation and thus stay robust to the data injected attack. Therefore, after 100 seconds, we see that the regret reduces to zero and the attacker cannot force the AV to deviate from its optimal safe spacing. 

\begin{figure}[!t]
	\centering
	\includegraphics[width=0.95\columnwidth]{./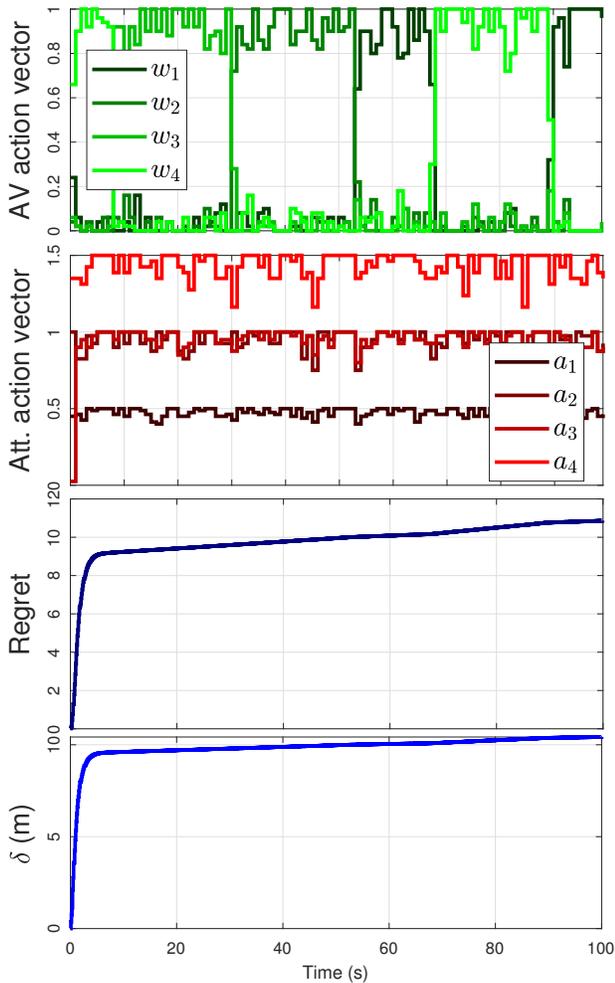}
	\caption{The AV and the attacker's action, regret, and deviation for our proposed algorithm in the case where the attacker attacks all the sensors.}
	\label{fig:att2all}
\end{figure}
\begin{figure}[!t]
	\centering
	\includegraphics[width=\columnwidth]{./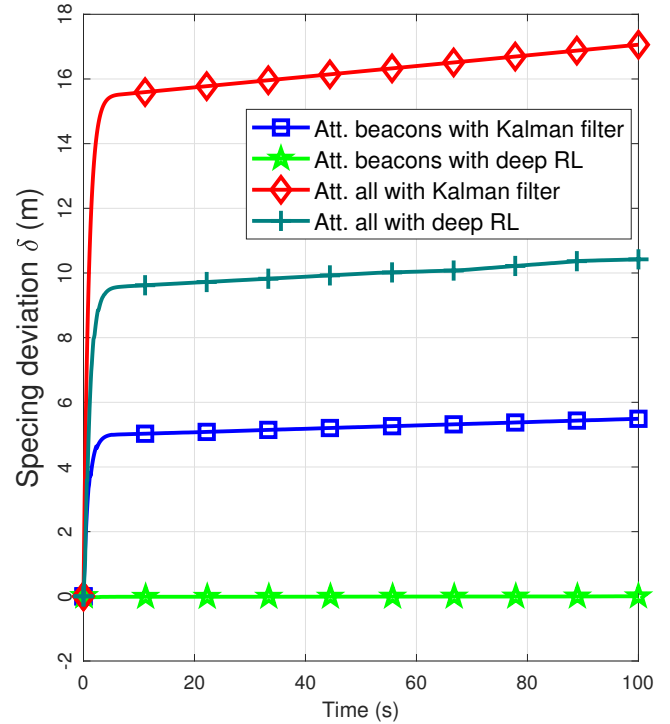}
	\caption{Comparison of the proposed deep RL algorithm with a baseline that does not use any learning process.}
	\label{fig:comp}
\end{figure}

Next, in another simulation, we consider the worst case security scenario, where the attacker can attack all of the sensor readings. Fig. \ref{fig:att2all} shows the AV and the attacker action process during the first 100 seconds of car-following. In this case, we can see from Fig. \ref{fig:att2all} that the attacker can attack to all the sensor values and thus, the AV cannot prioritize between the sensor readings as in the previous simulation. Thus, Fig. \ref{fig:att2all} shows that the AV tries to assign higher weights to one step in small time periods to deceive the attacker. In contrast, the attacker tries to maximize the value of injected data as seen from Fig. \ref{fig:att2all} that the injected data values are close to the threshold level. Moreover, Fig. \ref{fig:att2all} shows that in the first 10 seconds the value of regret has an abrupt increase, while in the remaining time the regret stays almost constant. Also, the spacing deviation reaches a value close to $ 10 $ meter. This means that, when the attacker can attack all of the sensor values, the AV cannot make the estimation robust to the injected attack, however the regret stays approximately constant. Thus, the AV can feedback the spacing deviation $ \delta $ to its car following model to compensate the deviation from the optimal safe spacing by changing the speed and thus make the AV resilient to such data injection attacks.

In Fig. \ref{fig:comp} we show the spacing deviation as a function of time. In this figure, we compare our proposed deep RL algorithm with a baseline scenario, where the AV knows the noise distributions and choose a static weighting vector $ \boldsymbol{w}_i $ using a \emph{Kalman} filter. Fig. \ref{fig:comp} shows that, even though the used Kalman filter converges to a constant spacing deviation, however, our proposed deep RL algorithm has a lower steady state deviation than the Kalman filter. This is due to the fact that the Kalman filter only takes into account the noise power, however, our proposed algorithm uses an adversarial approach to learn the attacker's action. This, indeed, enables the AV to minimize the deviation from the optimal safe spacing and remain more robust to the attacker.

\section{Conclusion}\label{sec:Conc}
In this paper, we have proposed a novel deep RL method which enables a robust dynamics control for AVs in presence of data injection attacks on their sensor readings. To analyze the incentives of attacker to attack on the AV data and address the AV's reaction to such attacks, we have formulated a game-theoretic problem between the attacker and the AV. We have shown that, deriving the mixed strategies at Nash equilibrium is analytically challenging. Thus, we have used our proposed deep RL algorithm to learn the optimal sensor fusion for the AV  at each time step that results in minimizing the deviation from an optimal safe inter-vehicle spacing. In the proposed deep RL algorithm, we have used LSTM blocks which can extract temporal features and dependence of AV and attacker actions and deviation values and feed them to a reinforcement learning algorithm. Simulation results show that, using the proposed deep RL algorithm, an AV can mitigate the effect of data injection attacks on the sensor data and thus stay robust to such attacks.
\bibliographystyle{IEEEtran}
\bibliography{references}
	
	


\end{document}